\documentclass[11pt]{article}

\usepackage{graphicx}
\usepackage{amssymb}
\usepackage{amsmath}
\usepackage{amsthm}
\usepackage{array}
\newcolumntype{L}[1]{>{\raggedright\let\newline\\\arraybackslash\hspace{0pt}}m{#1}}
\newcolumntype{C}[1]{>{\centering\let\newline\\\arraybackslash\hspace{0pt}}m{#1}}
\newcolumntype{R}[1]{>{\raggedleft\let\newline\\\arraybackslash\hspace{0pt}}m{#1}}

\usepackage{gastex}
\usepackage{eucal}
\usepackage{url}

\usepackage{cite}
\usepackage{float}
\usepackage{subfig}

\usepackage{multirow}
\usepackage{rotating}
\usepackage{tikz}
\usetikzlibrary{arrows,automata, positioning}
\definecolor{light-gray}{gray}{0.8}

\newtheorem{Thm}{Theorem}
\newtheorem{Prop}[Thm]{Proposition}
\newtheorem{Lemma}[Thm]{Lemma}
\newtheorem{Cor}[Thm]{Corollary}

\DeclareSymbolFont{rsfscript}{OMS}{rsfs}{m}{n}
\DeclareSymbolFontAlphabet{\mathrsfs}{rsfscript}

\newcommand{\dt}{.}
\DeclareMathOperator{\ret}{\mathrm{rt}}

\newcommand{\sa}{synchronizing automata}
\newcommand{\sna}{synchronizing $n$-automata}
\newcommand{\snan}{synchronizing $n$-au\-tom\-a\-ton}
\newcommand{\san}{synchronizing automaton}
\newcommand{\scn}{strongly connected}
\newcommand{\sca}{strongly connected automata}
\newcommand{\scan}{strongly connected automaton}
\newcommand{\sw}{reset word}
\newcommand{\sws}{reset words}

\newcommand{\rt}{reset threshold}

\newcommand{\mA}{\mathrsfs{A}}
\newcommand{\mB}{\mathrsfs{B}}

\newcommand{\mF}{\mathrsfs{F}}
\newcommand{\mI}{\mathrsfs{I}}
\newcommand{\mS}{\mathrsfs{S}}

\begin{document}%

\title{Slowly Synchronizing Automata\\ with Idempotent Letters of Low Rank}

\author{Mikhail V. Volkov\thanks{Supported by the Ministry of Science and Higher Education of the Russian Federation, project no.\ 1.580.2016, and the Competitiveness Enhancement Program of Ural Federal University.}}

\date{Institute of Natural Sciences and Mathematics\\ Ural Federal University\\
Ekaterinburg, Russia\\
m.v.volkov@urfu.ru}

\maketitle

\begin{abstract}
   We use a semigroup-theoretic construction by Peter Higgins in order to produce, for each even $n$, an $n$-state and 3-letter \san\ with the following two features:\\
1) all its input letters act as idempotent selfmaps of rank $\frac{n}2$;\\
2) its \rt\ is asymptotically equal to $\frac{n^2}2$.
\end{abstract}

\section{Background and overview}
\label{sec:intro}

A \emph{complete deterministic finite automaton} (DFA) is a triple $\langle Q,\Sigma,\delta\rangle$, where $Q$ and $\Sigma$ are finite sets called the \emph{state set} and the \emph{input alphabet} respectively, and  $\delta\colon Q\times\Sigma\to Q$ is a totally defined map called the \emph{transition function}. Let $\Sigma^*$ stand for the collection of all finite words over the alphabet $\Sigma$, including the empty word. The transition function extends to a function $Q\times\Sigma^*\to Q$, still denoted $\delta$, in the following natural way: for every $q\in Q$ and $w\in\Sigma^*$, we set $\delta(q,w):=q$ if $w$ is empty and $\delta(q,w):=\delta(\delta(q,v),a)$ if $w=va$ for some $v\in\Sigma^*$ and some $a\in\Sigma$. Thus, every word $w\in\Sigma^*$ induces the selfmap $q\mapsto\delta(q,w)$ of the set $Q$; we say that $w$ is \emph{idempotent} if so is the selfmap induced by $w$, that is, if $\delta(q,w)=\delta(q,w^2)$ for each $q\in Q$.

When we deal with a fixed DFA, we simplify our notation by suppressing the sign of the transition function; this means that we introduce the DFA as a pair $\langle Q,\Sigma\rangle$ rather than a triple $\langle Q,\Sigma,\delta\rangle$ and write $q\dt w$ for $\delta(q,w)$ and $Q\dt w$ for $\{\delta(q,w)\mid q\in Q\}$.

A DFA $\mA=\langle Q,\Sigma\rangle$ is called \emph{synchronizing} if there exists a word $w\in\Sigma^*$ whose action \emph{resets} $\mA$, that is, $w$ leaves the automaton
in one fixed state, regardless of the state at which $w$ is applied. This means that $q\dt w=q'\dt w$ for all $q,q'\in Q$. Any word $w$ with this property is said to be a \emph{reset} word for the automaton, and the minimum length of \sws\ for $\mA$, denoted $\ret(\mA)$, is called the \emph{reset threshold} of $\mA$.

Synchronizing automata serve as transparent and useful models of error-resistant systems in many applied areas (system and protocol testing, information coding, robotics). At the same time, \sa\ surprisingly arise in some parts of pure mathematics (symbolic dynamics, theory of substitution systems, and others). Basics of the theory of \sa\ as well as its diverse connections and applications are discussed, for instance, in the survey~\cite{Volkov:2008} and in the chapter~\cite{KV} of the forthcoming ``Handbook of Automata Theory''. Here we focus on only one aspect of the theory, namely, on the question of how the \rt\ of a \san\ depends on the number of states.

A DFA with two input letters is called \emph{binary}. In~1964, \v{C}ern\'{y}~\cite{Cerny:1964} constructed for each $n>1$, a binary \san\ $\mathrsfs{C}_n$ with $n$ states and \rt\ $(n-1)^2$. Recall the definition of $\mathrsfs{C}_n$. If we denote the states of $\mathrsfs{C}_n$ by $1,2,\dots,n$ and the input letters by $\sigma_1$ and $\sigma_2$, the actions of the letters are as follows:
\[
i\dt\sigma_1:=\begin{cases}
i &\text{if } i<n,\\
1 &\text{if } i=n;
\end{cases}
\qquad
i\dt\sigma_2:=\begin{cases}
i+1 &\text{if } i<n,\\
1 &\text{if } i=n.
\end{cases}
\]
The automaton $\mathrsfs{C}_n$ is shown in Fig.\,\ref{fig:cerny-n}.
\begin{figure}[ht]
\begin{center}
\unitlength .55mm
\begin{picture}(72,90)(0,-83)
\gasset{Nw=16,Nh=16,Nmr=8}
\node(n0)(36.0,-16.0){1}
\node(n1)(4.0,-40.0){$n$} \node(n2)(68.0,-40.0){2}
\node(n3)(16.0,-72.0){$n{-}1$} \node(n4)(56.0,-72.0){3}
\drawedge[ELdist=2.0](n1,n0){$\sigma_1,\sigma_2$} \drawedge[ELdist=1.5](n2,n4){$\sigma_2$}
\drawedge[ELdist=1.7](n0,n2){$\sigma_2$}
\drawedge[ELdist=1.7](n3,n1){$\sigma_2$}
\drawloop[ELdist=1.5,loopangle=30](n2){$\sigma_1$}
\drawloop[ELdist=2.4,loopangle=-30](n4){$\sigma_1$}
\drawloop[ELdist=1.5](n0){$\sigma_1$}
\drawloop[ELdist=1.5,loopangle=210](n3){$\sigma_1$}
\put(31,-73){$\dots$}
\end{picture}
\caption{The automaton $\mathrsfs{C}_n$}\label{fig:cerny-n}
\end{center}
\end{figure}

The automata in the \v{C}ern\'{y} series are well-known in the connection with the famous \v{C}ern\'{y} conjecture about the maximum \rt\ for \sa\ with $n$ states, see \cite{Volkov:2008}. The automata $\mathrsfs{C}_n$ provide the lower bound $(n-1)^2$ for this maximum, and the conjecture claims that these automata represent the worst possible case since it has been conjectured that every \san\ with $n$ states can be reset by a word of length $(n-1)^2$. The conjecture, first stated in the 1960s, resists researchers' efforts for more than 50 years. The best upper bound achieved so far is cubic in $n$; it is due to Shitov~\cite{Shitov:2019} who has slightly improved the bound established by  Szyku\l{}a~\cite{Szykula:2018}. In turn, Szyku\l{}a's bound is only slightly better than the upper bound $\frac{n^3-n}6$ established by Pin~\cite{Pin:1983} and Frankl~\cite{Frankl:1982} approx.\ 35 years ago.

Why is the \v{C}ern\'{y} conjecture so surprisingly hard? Here we mention only one of the difficulties encountered by the theory of \sa, namely, the shortage of examples of
\emph{slowly \sa}, i.e., automata with \rt\ close to the square of the number of states. It has already been observed in the literature that with a very restricted number of examples in hand, it was hard to verify various guesses and assumptions that had arisen when researchers were searching for approaches to the \v{C}ern\'{y} conjecture. That is why the history of investigations in the area abounds in  ``false trails'', i.e., auxiliary hypotheses that looked promising at first but were disproved after some time.

For brevity, a DFA with $n$ states is referred to as an $n$-\emph{automaton}. The series found in~\cite{Cerny:1964} still remains the only known infinite series of $n$-automata with \rt\ $(n-1)^2$. Besides that, we know only a few isolated examples of such automata, the largest (with respect to the state number) being the 6-automaton discovered by Kari~\cite{Kari:2001}; see~\cite{Volkov:2008} for a complete list of known \sna\ with \rt\ $(n-1)^2$. Moreover, even infinite series of \sna\ whose \rt{}s are \emph{asymptotically} equal to $(n-1)^2$ turns out to be extremely rare, especially those consisting of non-binary automata. We call a non-binary \san\ $\mA=\langle Q,\Sigma,\delta\rangle$ \emph{proper} if for every letter $a\in\Sigma$, the DFA $\langle Q,\Sigma^{a},\delta^{a}\rangle$ is not synchronizing, where $\Sigma^{a}:=\Sigma\setminus\{a\}$ and $\delta^{a}$ is the restriction of $\delta$ to the set $Q\times\Sigma^{a}$; in other terms, $\mA$ is proper whenever $|\Sigma|>2$ and each \sw\ of $\mA$ involves all letters of $\Sigma$. The largest known proper DFA with \rt\ that matches the \v{C}ern\'{y}'s bound is the 3-letter 5-automaton found by Roman~\cite{Roman:2008}. We mention also two infinite series of proper 3-letter \sna\ with \rt{}s $n^2-3n+3$ and $n^2-3n+2$ constructed by Kisielewicz and Szyku\l{}a~\cite{Kisielewicz&Szykula:2015} and two infinite series of proper \sna\ with $k$ letters ($k$ is any integer greater than 2) and  \rt{}s $n^2-(k+2)n+2k+1$ and $n^2-(k+2)n+2k+2$ invented by D\.zyga, Ferens, Gusev, and Szyku\l{}a~\cite[Section 6]{Dzyga:2017}.

It appears that the border value of \rt, at which one begins to observe proper non-binary \sna\ more frequently, is situated somewhere near the value $\frac{n^2}2$. In the literature, one can find several constructions of infinite series of non-binary \sna\ such that their \rt{}s are asymptotically equal to $\frac{n^2}2$ and, besides that, the automata in the series possess some specific property such as having a sink state~\cite{Rystsov:1997}, admitting a directed Eulerian walk~\cite{Szykula&Vorel:2016}, or being capable to perform an arbitrary selfmap of the state set~\cite{GGGJV:2019}. For further examples of proper non-binary \sa\ with relatively large \rt, we refer to~\cite{Catalano:2018,Pribavkina:2011}.

In the present note we provide a tool for constructing new series of \sa\ with similar parameters but rather peculiar extra properties. Namely, we describe a simple transformation that, given an arbitrary \snan\ $\mA$ with $k$ input letters, produces a \san\ $\mathbb{H}(\mA)$ with $2n$ states and $k+1$ input letters such that every letter of $\mathbb{H}(\mA)$ acts as an idempotent selfmap on the state set and $\ret(\mathbb{H}(\mA))=2\ret(\mA)$. If applied to a \snan\ whose \rt\ is close to $(n-1)^2$, the transformation results in a \san\ with $m:=2n$ states and \rt\ close to $\frac{m^2}2$. In particular, if one applies it to the automata $\mathrsfs{C}_n$ in the \v{C}ern\'{y} series, one gets a new series of proper \sa\ $\mB_m:=\mathbb{H}(\mathrsfs{C}_n)$ with $m:=2n$ states and 3 idempotent input letters such that the \rt\ of $\mB_m$ is equal to $2(n-1)^2=\frac{m^2}2-2m+2$.

An additional feature of the transformation $\mA\mapsto\mathbb{H}(\mA)$ is that it leads to automata in which all letters have relatively low rank. Recall that the \emph{rank} of a letter $a\in\Sigma$ with respect to a DFA $\mA=\langle Q,\Sigma\rangle$ is defined as the cardinality of the set $Q\dt a$. In an overwhelming majority of known examples of slowly \sna, their input letters all have ranks close to $n$. Gusev~\cite{Gusev:2012} investigated the following question: if all input letters of a \sa\ $\mA=\langle Q,\Sigma\rangle$ have rank at most $r<|Q|$, how does the \rt\ of $\mA$ depend on $r$? He provided a lower bound (about which he expressed the hope that it might be close to optimal) via rather a tricky construction that yields for each $r$, a binary \san\  whose letters have rank $r$ and whose reset threshold is at least $r^2-r-1$. In our automata of the form $\mathbb{H}(\mA)$, the rank $r$ of each input letter is equal to one half of number of states, whence, as we see from the aforementioned example of the series $\mB_m:=\mathbb{H}(\mathrsfs{C}_n)$, the \rt\ of such automata as a function of $r$ may attain the value $2r^2-4r+2$. This improves the lower bound from~\cite{Gusev:2012}; we recall, however, that automata in Gusev's construction are binary while the automata $\mB_m$ are not.

Our transformation $\mA\mapsto\mathbb{H}(\mA)$ is a straightforward adaptation of a construction suggested by Higgins~\cite{Higgins} in the realm of semigroup theory. Higgins used this construction to give a new simple proof of the following result independently obtained in \cite{GH:1984} and \cite{Laffey:1983}: an arbitrary (finite) semigroup may be embedded into another (finite) semigroup in which every element is the product of two idempotents. Our contribution consists in observing that, when restated in automata-theoretic terms, Higgins's construction leads to a transformation that preserves both the property of being synchronizing and the order of magnitude of \rt.

We describe the transformation and prove its main properties in Section~2. In Section~3 we establish a tight upper bound for the \rt{} of \sa\ with two idempotent input letters. Here the \rt{} is always less than the number of states, in a strong contrast to the non-binary case.

A preliminary version of this paper appeared in July 2018 as preprint~\cite{Volkov:2018} and was submitted to a journal in October 2018. Later the author learned that the transformation presented in Section~2 was independently found by Don and Zantema; their preprint~\cite{Don&Zantema:2018} appeared in December 2018.

\section{The transformation}

Take a DFA $\mathrsfs{A}=\langle Q,\Sigma\rangle$ with $Q:=\{1,2,\dots,n\}$ and $\Sigma:=\{\sigma_1,\dots,\sigma_k\}$. Let $Q':=\{1',2',\dots,n'\}$ be a copy of $Q$ such that $Q$ and $Q'$ is disjoint. Consider the automaton $\mathbb{H}(\mathrsfs{A})$ with the state set $R:=Q\cup Q'$ and the input alphabet $\Theta:=\{a_1,\dots,a_k,b\}$ in which the action of the letters is defined as follows:
\begin{gather}
\label{eq:a} i\dt a_j:=i\ \text{ and }\ i'\dt a_j:=i\dt\sigma_j\ \text{ for all }\ i=1,\dots,n,\ j=1,\dots,k;\\
\label{eq:b} i\dt b=i'\dt b:=i'\ \text{ for all }\ i=1,\dots,n.
\end{gather}
For an illustration, Fig.~\ref{fig:duplication} shows the automaton that one gets if the above construction is applied to the \v{C}ern\'{y} automaton $\mathrsfs{C}_n$ from Fig.~\ref{fig:cerny-n}.

\begin{figure}[th]
\begin{center}
    \unitlength=.85mm
       \begin{picture}(140,90)(0,0)
        \rpnode(qn2)(10,10)(20,7){$(n{-}1)'$}
        \rpnode(qn1)(28,52)(20,7){$n'$}
        \rpnode(q0)(70,70)(20,7){$1'$}
        \rpnode(q1)(112,52)(20,7){$2'$}
        \rpnode(q2)(130,10)(20,7){$3'$}
        \rpnode(qn2_)(35,10)(20,7){$n{-}1$}
        \rpnode(qn1_)(45,35)(20,7){$n$}
        \rpnode(q0_)(70,45)(20,7){$1$}
        \rpnode(q1_)(95,35)(20,7){$2$}
        \rpnode(q2_)(105,10)(20,7){$3$}
        \drawedge[curvedepth=5](qn2,qn1_){$a_2$}
        \drawedge[curvedepth=5](q0,q1_){$a_2$}
        \drawedge[curvedepth=5](q1,q2_){$a_2$}
        \drawedge[curvedepth=5](qn2,qn2_){$a_1$}
        \drawedge[curvedepth=5](qn1,q0_){$a_1,a_2$}
        \drawedge[curvedepth=5](q0,q0_){$a_1$}
        \drawedge[curvedepth=5](q1,q1_){$a_1$}
        \drawedge[curvedepth=5](q2,q2_){$a_1$}
        \drawedge[curvedepth=5](qn2_,qn2){$b$}
        \drawedge[curvedepth=5](qn1_,qn1){$b$}
        \drawedge[curvedepth=5](q0_,q0){$b$}
        \drawedge[curvedepth=5](q1_,q1){$b$}
        \drawedge[curvedepth=5](q2_,q2){$b$}
        \drawloop[loopangle=270](q0_){$a_1,a_2$}
        \drawloop[loopangle=225](q1_){$a_1,a_2$}
        \drawloop[loopangle=180](q2_){$a_1,a_2$}
        \drawloop[loopangle=-45](qn1_){$a_1,a_2$}
        \drawloop[loopangle=0](qn2_){$a_1,a_2$}
        \drawloop[loopangle=90](q0){$b$}
        \drawloop[loopangle=45](q1){$b$}
        \drawloop[loopangle=0](q2){$b$}
        \drawloop[loopangle=135](qn1){$b$}
        \drawloop[loopangle=180](qn2){$b$}
        \put(7,0){$\dots$}
        \put(128,0){$\dots$}
        \put(32,0){$\dots$}
        \put(103,0){$\dots$}
    \end{picture}
    \caption{The automaton $\mB_m:=\mathbb{H}(\mathrsfs{C}_n)$}\label{fig:duplication}
   \end{center}
\end{figure}

We start with registering two properties of the automaton $\mathbb{H}(\mathrsfs{A})$ that immediately follow from the definitions~\eqref{eq:a} and~\eqref{eq:b}.

\begin{Lemma}
\label{lem:idempotent}
For every DFA $\mathrsfs{A}$, the DFA\/ $\mathbb{H}(\mathrsfs{A})=\langle R,\Theta\rangle$ is such that each letter in $\Theta$ acts on $R$ as an idempotent selfmap and has rank  $\frac{|R|}2$.
\end{Lemma}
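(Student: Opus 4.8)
The plan is to verify both assertions by directly unwinding the defining equations~\eqref{eq:a} and~\eqref{eq:b}, treating the letters $a_1,\dots,a_k$ and the letter $b$ separately. Since $R=Q\cup Q'$ is the disjoint union of two copies of an $n$-element set, we have $|R|=2n$, so the claimed rank $\frac{|R|}2$ equals $n$. It therefore suffices to show, for each letter, that its image is a set of exactly $n$ states and that it acts idempotently, i.e.\ that $x\dt\theta^2=x\dt\theta$ for every state $x\in R$ and every letter $\theta\in\Theta$.

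First I would handle the letters $a_j$. By~\eqref{eq:a}, every state $i\in Q$ is fixed by $a_j$, so $Q\subseteq R\dt a_j$; on the other hand each primed state $i'$ is sent by $a_j$ into $Q$, namely to $i\dt\sigma_j$. Hence $R\dt a_j=Q$, which has cardinality $n=\frac{|R|}2$. For idempotency I would check the two kinds of states: for $i\in Q$ one has $i\dt a_j^2=(i\dt a_j)\dt a_j=i\dt a_j$ at once, while for $i'\in Q'$ the image $i'\dt a_j=i\dt\sigma_j$ already lies in $Q$ and is thus fixed by a further application of $a_j$, whence $i'\dt a_j^2=i'\dt a_j$.

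Next I would treat the letter $b$. By~\eqref{eq:b}, $b$ sends each $i\in Q$ to its primed counterpart $i'$ and fixes every $i'\in Q'$; consequently $R\dt b=Q'$, again of cardinality $n=\frac{|R|}2$. Idempotency is equally direct: for $i\in Q$ we get $i\dt b^2=(i\dt b)\dt b=i'\dt b=i'=i\dt b$, and for $i'\in Q'$ both $i'\dt b$ and $i'\dt b^2$ equal $i'$.

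There is no genuine obstacle here---the statement is essentially a reformulation of the defining equations, which is exactly why the paper says it follows immediately. The only point demanding a little care is keeping track of which half ($Q$ or $Q'$) each application lands in: the letters $a_j$ always produce unprimed states and then fix them, whereas $b$ always produces primed states and then fixes them. It is precisely this ``one-way'' behaviour across the two halves---each letter pushing everything into one copy on which it is the identity---that simultaneously yields idempotency and rank exactly $n$.
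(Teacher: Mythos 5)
Your proof is correct and matches the paper's intent exactly: the paper gives no explicit proof, stating only that the lemma ``immediately follows from the definitions~\eqref{eq:a} and~\eqref{eq:b}'', and your direct verification that $R\dt a_j=Q$ and $R\dt b=Q'$ (each of size $n=\frac{|R|}2$) together with the case-by-case idempotency check is precisely the routine unwinding the author had in mind.
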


Now we establish the relation between \rt{}s of $\mA$ and $\mathbb{H}(\mathrsfs{A})$.

\begin{Thm}
\label{thm:main}
The automaton\/ $\mathbb{H}(\mathrsfs{A})$ is synchronizing if and only if so is $\mathrsfs{A}$,  and if this is the case, then $\ret(\mathbb{H}(\mA))=2\ret(\mA)$.
\end{Thm}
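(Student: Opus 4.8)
The plan is to track the action of any word over $\Theta$ on $R$ through two independent data: the \emph{flag} of a state (whether it lies in $Q$ or in $Q'$) and its \emph{index} $\bar{x}\in Q$, defined by $\bar{i}=i$ and $\overline{i'}=i$. First I would record the elementary effect of each letter: by~\eqref{eq:b} the letter $b$ fixes every index and sets the flag to ``primed'', while by~\eqref{eq:a} the letter $a_j$ sets the flag to ``unprimed'' and it alters the index, by $\sigma_j$, \emph{only} when applied to a primed state, doing nothing on an unprimed one. Consequently, after reading a nonempty $w\in\Theta^*$ the flag of the result depends solely on the last letter of $w$, and the index is transformed by the word $u(w)\in\Sigma^*$ read off by taking $\sigma_j$ for each occurrence in $w$ of a letter $a_j$ immediately preceded by $b$ (plus a leading $\sigma_j$ when $w$ begins with $a_j$ and the start is in $Q'$). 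I will call such occurrences of $a_j$ \emph{effective}; the point of this bookkeeping is that $b$'s and non-effective $a_j$'s never move the index, so the entire behaviour on indices is governed by this short subword.

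For the ``if'' direction and the upper bound, suppose $u=\sigma_{j_1}\cdots\sigma_{j_\ell}$ is a \ssw\ for $\mA$, so $\ell=\ret(\mA)$ and $Q\dt u=\{q_0\}$ for some $q_0\in Q$. I would verify that $w:=b\,a_{j_1}\,b\,a_{j_2}\cdots b\,a_{j_\ell}$, of length $2\ell$, resets $\mathbb{H}(\mA)$: the opening $b$ sends each start $x$ to the primed state of the same index, and every $a_{j_m}$ in $w$ is immediately preceded by a $b$ and hence effective, so by the observation $x$ is carried to the unprimed state with index $\bar{x}\dt u=q_0$. Thus $R\dt w=\{q_0\}$, showing that $\mathbb{H}(\mA)$ is synchronizing and that $\ret(\mathbb{H}(\mA))\le 2\ret(\mA)$.

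For the ``only if'' direction and the matching lower bound, I would take an arbitrary \sw\ $w$ of $\mathbb{H}(\mA)$ and restrict attention to the starts in $Q$. By the observation each $i\in Q$ is sent to a state with index $i\dt u'$ and a fixed flag, where $u'=u'(w)\in\Sigma^*$ is the product of $\sigma_j$ over those $a_j$ in $w$ immediately preceded by $b$. Since $w$ resets $\mathbb{H}(\mA)$, the values $i\dt u'$ must agree for all $i\in Q$, so $u'$ is a \sw\ for $\mA$; this proves $\mA$ synchronizing and gives $|u'|\ge\ret(\mA)$. Now set $p:=|u'|$, the number of effective $a$'s, and let $\alpha,\beta$ be the numbers of occurrences in $w$ of the letters $a_1,\dots,a_k$ and of $b$, respectively. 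Each effective $a$ contributes one $a$-letter and is immediately preceded by a \emph{distinct} $b$, so $p\le\alpha$ and $p\le\beta$, whence $|w|=\alpha+\beta\ge 2p\ge 2\ret(\mA)$. With the upper bound this yields $\ret(\mathbb{H}(\mA))=2\ret(\mA)$.

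The main obstacle is not any single estimate but getting the bookkeeping of the first paragraph exactly right: one must notice that $b$ leaves every index untouched while forcing the flag, that an $a_j$ does useful work only from a primed state, and therefore that each index-moving step is necessarily paired with its own preceding $b$. Once this $2\!:\!1$ pairing between effective $a$'s and their triggering $b$'s is made precise, both the explicit word realizing $2\ret(\mA)$ and the bound $p\le\min(\alpha,\beta)$ fall out at once. (The degenerate one-state case, where $\ret(\mA)=0$ yet $\mathbb{H}(\mA)$ still requires the letter $b$, should be excluded or handled separately.)
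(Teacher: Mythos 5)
Your proposal is correct, and its second half takes a genuinely different route from the paper. The upper bound is essentially identical: your word $b\,a_{j_1}\cdots b\,a_{j_\ell}$ is exactly the paper's image $\chi(w)$ under the morphism $\chi(\sigma_j):=ba_j$. The lower bound, however, diverges substantially. The paper works only with a \sw\ $U$ of \emph{minimum} length and normalizes its shape using minimality: no factor $b^2$ or $a_sa_t$ can occur, $U$ cannot end with $b$, and $U$ must start with $b$ --- the last step being a contradiction argument that invokes the already-proved inequality $\ret(\mathbb{H}(\mA))\le2\ret(\mA)$; this forces $U=\chi(u)$ and yields $|U|=2|u|\ge2\ret(\mA)$. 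You instead prove the stronger statement that \emph{every} \sw\ $w$ of $\mathbb{H}(\mA)$, minimal or not, has length at least $2\ret(\mA)$: tracking the flag/index decomposition, the index evolution from starts in $Q$ is governed by the word $u'$ of effective $a$'s, so $p:=|u'|\ge\ret(\mA)$, and since each effective $a$ is paired injectively with its triggering $b$, you get $|w|\ge 2p$. This counting argument needs no normalization, no minimality, and no circular use of the upper bound, so it is arguably more elementary and more robust; what the paper's approach buys in exchange is a precise structural description of the minimal \sws\ of $\mathbb{H}(\mA)$ (they are exactly the $\chi$-images of minimal \sws\ of $\mA$), which your argument does not directly provide. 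Your bookkeeping observation (the flag depends only on the last letter; the index is transformed by the subword of effective $a$'s) should be stated and proved by an easy induction on $|w|$, but that is routine. Finally, your parenthetical about the degenerate case is well spotted: for a one-state $\mA$ one has $\ret(\mA)=0$ while $\ret(\mathbb{H}(\mA))=1$, so the equality in the theorem implicitly assumes $n\ge2$ --- a caveat the paper's own proof silently glosses over (its claim $Q\dt w=R\dt W$ fails for the empty word).
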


\begin{proof}
First suppose that $\mathrsfs{A}=\langle Q,\Sigma\rangle$ with $Q=\{1,2,\dots,n\}$ and $\Sigma=\{\sigma_1,\dots,\sigma_k\}$ is a \san. Let $w\in\Sigma^*$ be a \sw\ of $\mA$ of minimum length. Consider the morphism $\chi\colon\Sigma^*\to\Theta^*$ defined by the rule $\chi(\sigma_j):=ba_j$ for $j=1,\dots,k$ and let $W:=\chi(w)$. Clearly, $|W|=2|w|$. From~\eqref{eq:a} and~\eqref{eq:b} we readily conclude that
\begin{equation}
\label{eq:imitation}
i.\sigma_j=i.\chi(\sigma_j)=i'.\chi(\sigma_j)
\end{equation}
for all $i=1,\dots,n$ and all $j=1,\dots,k$. From~\eqref{eq:imitation}, we get $Q\dt w=R\dt W$, whence $W$ is a \sw\ for the automaton $\mathbb{H}(\mathrsfs{A})$. We see that $\mathbb{H}(\mathrsfs{A})$ is synchronizing and
\begin{equation}
\label{eq:doubling}
\ret(\mathbb{H}(\mA))\le2\ret(\mA)
\end{equation}
since $\ret(\mathbb{H}(\mA))\le|W|=2|w|$ and $|w|=\ret{\mA}$ by the choice of the word $w$.

Conversely, suppose that the automaton $\mathbb{H}(\mathrsfs{A})$ is synchronizing, and let $U\in\Theta^*$ be its \sw\ of minimum length so that $\ret(\mathbb{H}(\mA))=|U|$. Since the letter $b$ acts as an idempotent on $R$, the word $b^2$ does not occur in $U$ as a factor. (Otherwise we could reduce $b^2$ to $b$ and obtain a shorter \sw.) Further, since by~\eqref{eq:a} the image of each of the letters $a_1,\dots,a_k$ is contained in $Q$ and the letters $a_1,\dots,a_k$ act as the identity map on $Q$, we conclude that $r\dt a_sa_t=r\dt a_s$ for all $r\in R$ and all $s,t\in\{1,\dots,k\}$. Therefore, none of the words $a_sa_t$ can occur in $U$ as a factor. Thus, in $U$, every occurrence of the letter $b$, except its possible occurrence as the last letter of $U$, is followed by an occurrence of one of the letters $a_1,\dots,a_k$, and for each $j=1,\dots,k$, every occurrence of the letter $a_j$, except its possible occurrence as the last letter of $U$, is followed by an occurrence of the letter $b$. In other terms, the occurrences of $b$ in the word $U$ alternate with the occurrences of $a_1,\dots,a_k$.

Suppose that $b$ is the last letter of $U$. Then $U=U'b$ for some $U'\in \Theta^*$ and $U'$ does not reset $\mathbb{H}(\mathrsfs{A})$ since $|U'|<|U|$ and $U$ was chosen to be a \sw\ of minimum length. Therefore, $|R\dt U'|\ge2$. The last letter of $U'$ is one of the letters $a_1,\dots,a_k$, whence $R\dt U'\subseteq Q$. However, since the letter $b$ bijectively maps $Q$ onto $Q'$, it cannot merge the states in any subset of $Q$. Hence, $|R\dt U|=|R\dt U'b|=|R\dt U'|\ge2$, and this contradicts our choice of the word $U$. Thus, $U$ cannot end with $b$.

Now we aim to show that $U$ starts with $b$. Again, we argue by contradiction. Suppose that $U=a_jU''$ for some $j\in\{1,\dots,k\}$ and some $U''\in \Theta^*$. Observe that $U''$ starts with $b$, does not end with $b$, and the occurrences of $b$ in $U''$ alternate with the occurrences of $a_1,\dots,a_k$. Thus, $U''$ can be decomposed as a product of factors of the form $ba_j$, $j=1,\dots,k$, that is, $U''$ belongs to the image of the morphism $\chi\colon\Sigma^*\to\Theta^*$ introduced in the first paragraph of the proof. Consider the word $v\in\Sigma^*$ defined by $v:=\chi^{-1}(U'')$. It is clear that $|v|=\frac{|U''|}2$, and from \eqref{eq:imitation}, it readily follows that $Q\dt v=Q\dt U''$. By~\eqref{eq:a}, the letter $a_j$ fixes all states in $Q$, whence $Q\dt U''=Q\dt a_jU''=Q\dt U\subseteq R\dt U$. Since $U$ is a \sw\ for $\mathbb{H}(\mathrsfs{A})$, the set $R\dt U$ is a singleton and so is the set $Q\dt v=Q\dt U''$. We conclude that $v$ is a \sw\ for $\mA$, whence $\ret(\mA)\le|v|$. Using the inequality \eqref{eq:doubling}, we arrive at the conclusion that
\[
\ret(\mathbb{H}(\mA))\le 2\ret(\mA)\le 2|v|=|U''|<|U|,
\]
which contradicts the choice of the word $U$.

Summing up the facts established so far, the word $U$ starts with the letter $b$, ends with one of the letters $a_1,\dots,a_k$, and  the occurrences of $b$ in $U$ alternate with the occurrences of $a_1,\dots,a_k$. This ensures that $U$ is a product of factors of the form $ba_j$, $j=1,\dots,k$, and we can apply to $U$ the inverse of the morphism $\chi$ as we did in the preceding paragraph for the word $U''$. Let $u:=\chi^{-1}(U)$. Then $|u|=\frac{|U|}2$, and from \eqref{eq:imitation} we conclude that $Q\dt u=R\dt U$. Hence, $u$ is a \sw\ for $\mA$, and $\ret(\mA)\le |u|=\frac{|U|}2=\frac{\ret(\mathbb{H}(\mA))}2$. Combining this inequality with \eqref{eq:doubling}, we obtain the equality $\ret(\mathbb{H}(\mA))=2\ret(\mA)$.
\end{proof}

By Lemma~\ref{lem:idempotent} and Theorem~\ref{thm:main}, the transformation $\mA\mapsto\mathbb{H}(\mA)$  applied to the automata in \v{C}ern\'{y}'s series produces a series of automata that witnesses the following fact:

\begin{Cor}
\label{cor:cerny}
For each even $n$, there exists a $3$-letter proper \snan\ with \rt\ $\frac{n^2}2-2n+2$ whose letters are idempotents of rank $\frac{n}2$.
\end{Cor}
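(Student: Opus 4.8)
The plan is to realize Corollary~\ref{cor:cerny} simply as an instantiation of Lemma~\ref{lem:idempotent} and Theorem~\ref{thm:main} applied to the \Cerny\ automaton $\mathrsfs{C}_n$, and then to check the two remaining assertions that are not immediate from those results, namely the exact numerical value of the \rt\ and the \emph{properness} of the resulting automaton. First I would fix an even integer $n$, write $n=2m$, and set $\mB_n:=\mathbb{H}(\mathrsfs{C}_m)$. Since $\mathrsfs{C}_m$ has $m$ states, the automaton $\mB_n$ has $|R|=2m=n$ states, and by Lemma~\ref{lem:idempotent} all three of its input letters $a_1,a_2,b$ act as idempotent selfmaps of rank $\frac{|R|}2=m=\frac n2$. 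This disposes of the claims about state number, letter count, idempotency, and rank with no further work.

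Next I would pin down the \rt. It is classical (and recalled in the excerpt) that $\mathrsfs{C}_m$ is synchronizing with $\ret(\mathrsfs{C}_m)=(m-1)^2$. By Theorem~\ref{thm:main}, $\mB_n$ is synchronizing and
\[
\ret(\mB_n)=2\ret(\mathrsfs{C}_m)=2(m-1)^2.
\]
Substituting $m=\frac n2$ gives $2\bigl(\frac n2-1\bigr)^2=2\cdot\frac{(n-2)^2}4=\frac{(n-2)^2}2=\frac{n^2-4n+4}2=\frac{n^2}2-2n+2$, which is exactly the value stated in the corollary. This is a one-line algebraic simplification and presents no difficulty.

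The one genuinely substantive point is \emph{properness}: I must show that each \sw\ of $\mB_n$ uses all three letters $a_1,a_2,b$, equivalently that deleting any single letter destroys synchronization. The analysis inside the proof of Theorem~\ref{thm:main} already shows that every minimal (indeed every reduced) \sw\ $U$ of $\mB_n$ alternates occurrences of $b$ with occurrences of the $a_j$ and is a $\chi$-image of a \sw\ of $\mathrsfs{C}_m$; so $b$ certainly must occur, and a nonempty reset word cannot consist of $b$ alone because $b$ alone never merges distinct states. Thus the crux is to rule out \sws\ of $\mB_n$ that avoid $a_1$ or that avoid $a_2$. For this I would invoke the correspondence $u=\chi^{-1}(U)$: a reset word of $\mB_n$ omitting $a_j$ pulls back to a word over $\{\sigma_1,\sigma_2\}\setminus\{\sigma_j\}$ that resets $\mathrsfs{C}_m$. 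Hence properness of $\mB_n$ reduces to the statement that $\mathrsfs{C}_m$ is itself proper, i.e.\ that neither letter $\sigma_1$ nor $\sigma_2$ synchronizes $\mathrsfs{C}_m$ on its own. This last fact is elementary and well known: the letter $\sigma_2$ acts as a cyclic permutation of $\{1,\dots,m\}$ (composed with the wrap-around at $m$), so it is a bijection and can never merge two states, while $\sigma_1$ fixes $1,\dots,m-1$ and sends $m$ to $1$, so $\{1,2\}\dt\sigma_1^t=\{1,2\}$ for all $t$ and $\sigma_1$ alone cannot reset the automaton either.

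I expect the main obstacle to be purely expository rather than mathematical: one must be careful to argue properness at the level of $\mB_n$ and not merely at the level of $\mathrsfs{C}_m$. The clean way to close this gap is the reduction just described --- showing that any letter missing from a \sw\ of $\mB_n$ would, via $\chi^{-1}$, yield a \sw\ of $\mathrsfs{C}_m$ missing the corresponding letter (with $b$ handled separately by the fact that $b$ acts injectively on $Q$ and so cannot by itself merge states). Once this reduction is stated, properness of $\mB_n$ follows from the easy properness of $\mathrsfs{C}_m$, and the corollary is complete.
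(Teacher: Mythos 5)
Your proposal is correct and follows essentially the same route as the paper: the paper derives the corollary in one line by applying Lemma~\ref{lem:idempotent} and Theorem~\ref{thm:main} to the \Cerny\ series, exactly as you do, with the same substitution $2\bigl(\frac n2-1\bigr)^2=\frac{n^2}2-2n+2$. The properness argument you supply --- pulling a reset word of $\mathbb{H}(\mathrsfs{C}_m)$ that omits a letter back through $\chi^{-1}$ to a reset word of $\mathrsfs{C}_m$ omitting the corresponding letter, after the same normalizations ($b^2\to b$, $a_sa_t\to a_s$) used in the proof of Theorem~\ref{thm:main}, which never introduce the missing letter --- is left implicit in the paper (it is only remarked in the conclusion that properness is preserved), so your filling in that detail is sound and consistent with the paper's intent.
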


\section{The binary case}
A DFA with one input letter is called \emph{unary}. It is known and easy to very that for every unary \san, its \rt\ is strictly less than the number of states. Therefore the application of the transformation $\mA\mapsto\mathbb{H}(\mA)$ to a unary \san\ cannot produce a binary \san\ with idempotent input letters and \rt\ greater than or equal to the number of states. This does not exclude the possibility that such binary \sa\ can be constructed in some other way. In this section, we address this issue. We prove that the \rt{} of every \snan\ with two idempotent input letters does not exceed $n-1$ and this bound is tight. The upper bound $n-1$ has been established for several types of \sna, see, e.g., \cite{Ananichev:2005,Rystsov:1997}, but it appears that automata with two idempotent input letters do not fall into any previously analyzed class.

The following arguments involve a few standard concepts of automata theory which we recall here for the sake of completeness.

Let $\mathrsfs{A}=\langle Q,\Sigma,\delta\rangle$ be a DFA. If $S\subseteq Q$ is such that $\delta(s,a)\in S$ for all $s\in S$ and $a\in\Sigma$, one can consider the DFA $\mathrsfs{S}:=\langle S,\Sigma,\tau\rangle$, where the function $\tau$ is defined as the restriction of $\delta$ to the set $S\times\Sigma$, that is, $\tau(s,a):=\delta(s,a)$ for all $s\in S$ and $a\in\Sigma$. Any such DFA is said to be a \emph{subautomaton} of $\mathrsfs{A}$. Clearly, if $\mA$ is synchronizing, then so is each of its subautomata.

An equivalence $\pi$ on the state set $Q$ of $\mathrsfs{A}$ is called a \emph{congruence} if $(p,q)\in\pi$ implies $\bigl(\delta(p,a),\delta(q,a)\bigr)\in\pi$ for all $p,q\in Q$ and all $a\in\Sigma$. If $\pi$ is a congruence and $q$ is a state, $[q]_\pi$ stands for the $\pi$-class containing $q$. The \emph{quotient} {$\mathrsfs{A}/\pi$} is the DFA $\langle Q/\pi,\Sigma,\delta_\pi\rangle$, where $Q/\pi:=\{[q]_\pi\mid q\in Q\}$ and the function $\delta_\pi$ is defined by the rule $\delta_\pi([q]_\pi,a):=[\delta(q,a)]_\pi$. Again, if $\mA$ is synchronizing, then so is each of its quotients.

A state $s$ of a DFA is called a \emph{sink} if $s\dt a=s$ for each input letter~$a$. Clearly, if a \san\ has a sink, then the sink must be unique, and every \sw\ must bring the automaton to the sink.

Let $\mathrsfs{A}=\langle Q,\Sigma\rangle$ be a DFA and $p,q\in Q$. We say that $q$ is \emph{reachable} from $p$ if there is a word $w\in\Sigma^*$ such that $p\dt w=q$.
A DFA is called \emph{\scn} if every state in it is reachable from every other state. A \scan\ with more than one state cannot have any sink.

It is actually a part of \sa\ folklore that estimating \rt{} for a ``sufficiently robust'' class of \sa\ reduces to two special cases: the case of automata with a unique sink and the case of \sca. Here it is convenient to employ this folklore fact in the following form, which is a special case of \cite[Proposition 2.1]{Volkov:2009}.

\begin{Lemma}
\label{lem:reduction} Let\/ $\mathbf{C}$ be any class of automata closed under taking subautomata and quotients, and let\/ $\mathbf{C}_n$ stand for the class of all $n$-automata in~$\mathbf{C}$. If each \san\ in\/ $\mathbf{C}_n$ which either is strongly connected or possesses a unique sink has a \sw\ of length $n-1$, then the same holds for all \sa\ in\/ $\mathbf{C}_n$.
\end{Lemma}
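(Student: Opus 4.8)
The plan is to exploit the decomposition of a \san\ into its ``sink'' strongly connected component and the quotient obtained by collapsing that component, and then to glue together short \sws\ for these two pieces. So let $\mA=\langle Q,\Sigma\rangle$ be an arbitrary \san\ in $\mathbf{C}_n$; I want to produce a \sw\ of length $n-1$. The whole argument rests on the structural observation that a \san\ has a \emph{unique} minimal strongly connected component.

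First I would isolate this sink component. Consider the partition of $Q$ into strongly connected components and the induced acyclic order on them. If there were two distinct minimal components $S_1,S_2$, each would be \emph{absorbing} (a minimal component $S$ satisfies $S\dt a\subseteq S$ for every $a\in\Sigma$, since otherwise $S$ could reach a strictly lower component, contradicting minimality); hence for every word $w$ and all $p\in S_1$, $q\in S_2$ we would have $p\dt w\in S_1$ and $q\dt w\in S_2$, so $p\dt w\neq q\dt w$, which is incompatible with synchronization. Thus there is a single minimal component $S$, and following the acyclic order downward shows that $S$ is reachable from every state.

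Next I would extract the two sub-problems. Since $S$ is absorbing, the restriction $\mS$ of $\mA$ to $S$ is a subautomaton; it is \scn\ and, being a subautomaton of a \san, it is itself synchronizing, and it lies in $\mathbf{C}$ by closure under subautomata. On the other side, let $\pi$ be the equivalence whose only non-singleton class is $S$. Because $S$ is absorbing, $\pi$ is a congruence, and in the quotient $\mA/\pi$ the class $[S]_\pi$ is a sink; as $\mA/\pi$ is synchronizing (closure under quotients) and has a sink, that sink is unique by the folklore fact recalled above, so $\mA/\pi$ is a \san\ with a unique sink lying in $\mathbf{C}$. Writing $s:=|S|$, the subautomaton $\mS$ has $s$ states and the quotient $\mA/\pi$ has $n-s+1$ states.

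Finally I would compose. By the hypothesis of the lemma applied at state numbers $s$ and $n-s+1$, there are \sws\ $w_2$ for the \scan\ $\mS$ and $w_1$ for the unique-sink automaton $\mA/\pi$ with $|w_2|\le s-1$ and $|w_1|\le(n-s+1)-1=n-s$. The word $w_1$ drives every $\pi$-class to $[S]_\pi$, which lifts to $Q\dt w_1\subseteq S$; then $w_2$ collapses $S$ to a single state, so $w_1w_2$ resets $\mA$ and $|w_1w_2|\le(n-s)+(s-1)=n-1$, as required. The degenerate values $s=n$ (then $\mA$ is already \scn) and $s=1$ (then $\mA$ already has a unique sink) fit the same formula and invoke the hypothesis at level $n$ on $\mA$ itself, so no separate treatment and no circularity arises. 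The main thing to get right is the structural core of the first two steps --- verifying that the minimal component is unique and absorbing and that collapsing it yields a genuine congruence with $[S]_\pi$ as the unique sink; the length bookkeeping is then immediate, and the only point worth flagging is that the hypothesis is invoked not at $n$ but at the two (in the non-degenerate case strictly smaller) parameters $s$ and $n-s+1$.
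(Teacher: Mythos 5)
Your proof is correct, and it is worth noting that the paper itself gives no proof of this lemma at all: it is quoted as ``a special case of [Volkov 2009, Proposition~2.1]''. What you have written is precisely the folklore decomposition underlying that citation: the digraph of strongly connected components of a \san\ has a unique minimal (hence absorbing) component $S$, the restriction to $S$ is a \scn\ synchronizing subautomaton, collapsing $S$ yields a synchronizing quotient with a unique sink, and composing the two \sws\ gives length at most $(n-s)+(s-1)=n-1$; every individual step you give (uniqueness of the minimal component via synchronizability, $\pi$ being a congruence, $[S]_\pi$ being the unique sink of the quotient, the lifting $Q\dt w_1\subseteq S$) checks out, and the degenerate cases $s=1$ and $s=n$ are handled without circularity exactly as you say. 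The one point you rightly flag deserves emphasis: read strictly, the lemma fixes a single $n$ and speaks only of $\mathbf{C}_n$, whereas your argument invokes the hypothesis at the state counts $s$ and $n-s+1$, which are smaller than $n$ when $1<s<n$. So your proof establishes the lemma under the natural reading in which the hypothesis is quantified over all state counts simultaneously --- which is how the lemma is actually applied in the paper (both the \scn\ case and the unique-sink case of Proposition~\ref{prop:binary} are verified for every $n$), and which matches the cited proposition, whose superadditivity condition holds with equality for the bound $f(n)=n-1$ since $f(s)+f(n-s+1)=(s-1)+(n-s)=n-1=f(n)$. In short: a correct, self-contained proof of a statement the paper only cites, with the quantification caveat honestly identified.
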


\begin{Prop}
\label{prop:binary}
The \rt\ of every \snan\ with two idempotent input letters does not exceed $n-1$.
\end{Prop}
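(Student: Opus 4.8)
The plan is to invoke the reduction Lemma~\ref{lem:reduction}. First I would check that the class $\mathbf{C}$ of all DFAs whose every letter acts idempotently is closed under subautomata and quotients: the restriction of an idempotent map to an invariant subset is again idempotent, and the map induced on a quotient satisfies $a_\pi^2=(a^2)_\pi=a_\pi$. Hence, by Lemma~\ref{lem:reduction}, it suffices to bound $\ret$ by $n-1$ for those members of $\mathbf{C}_n$ that are either strongly connected or possess a unique sink, and I would organize the whole argument as a strong induction on $n$ (the base $n=1$ being trivial).

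For the strongly connected case I would first observe that any state in $Q\dt a\cap Q\dt b$ is fixed by both letters (an idempotent is the identity on its image), hence is a sink; as a strongly connected automaton with $n>1$ has no sink, this gives $Q\dt a\cap Q\dt b=\emptyset$. Since in a strongly connected automaton every state is reachable and so belongs to the image of some letter, we get $Q=Q\dt a\sqcup Q\dt b$. A routine reachability argument then forces both induced ``crossing'' maps $Q\dt b\to Q\dt a$ and $Q\dt a\to Q\dt b$ to be bijections, so the image of every nonempty word is either $Q\dt a$ or $Q\dt b$, both of size $\tfrac n2$. Such an automaton is therefore synchronizing only when $n\le 2$, where $\ret\le 1=n-1$, so the strongly connected case is essentially vacuous.

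The unique-sink case is the heart of the matter. Let $s$ be the sink; arguing as above, every point of $Q\dt a\cap Q\dt b$ is a sink, whence $Q\dt a\cap Q\dt b=\{s\}$. If some state lies outside $Q\dt a\cup Q\dt b$, then $Q\dt a\cup Q\dt b$ is a \emph{proper} subautomaton belonging to $\mathbf{C}$; by the induction hypothesis it has a reset word $u$ of length at most $|Q\dt a\cup Q\dt b|-1$, and since $Q\dt a\subseteq Q\dt a\cup Q\dt b$, the word $au$ resets the whole automaton and has length at most $n-1$. This reduces everything to the ``glued'' case $Q=Q\dt a\cup Q\dt b$ with $Q\dt a\cap Q\dt b=\{s\}$, where $n=|Q\dt a|+|Q\dt b|-1$.

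In the glued case I would analyse a shortest reset word $U$ directly. Exactly as in the proof of Theorem~\ref{thm:main}, idempotency forbids the factors $aa$ and $bb$, so $U$ alternates between $a$ and $b$; consequently the successive images $Q=S_0,S_1,\dots,S_\ell=\{s\}$ are pairwise distinct (a repeated image would permit a shortcut), each contains $s$, and they lie alternately in $Q\dt a$ and in $Q\dt b$, two sets that meet only in $s$. The target is $\ell\le n-1$. The strategy is to prove that \emph{along each side the image sizes strictly decrease}, which bounds the number of visits to $Q\dt a$ by $|Q\dt a|$ and to $Q\dt b$ by $|Q\dt b|$; a boundary refinement (only one side reaches the singleton, the other staying of size $\ge 2$, together with the fact that the first letter already drops the rank) then trims the crude estimate down to $n-1$. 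The decisive and hardest step is controlling the rank-preserving (``flat'') moves: the steps at which the applied letter acts injectively on the current image and merely swaps the active set between the two sides without shrinking it. Such steps genuinely occur (small examples with $\ret=n-1$ already exhibit them), so the bound cannot follow from a naive ``one merge per letter'' count. I expect the main obstacle to be precisely this: showing that a shortest word contains no \emph{avoidable} flat move—formally, that a flat step cannot be followed by a continuation reaching, in fewer letters, a set already reachable from $S_0$—so that minimality forces each flat move to be compensated by a genuine merge and the per-side strict decrease is preserved.
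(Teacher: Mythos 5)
Your reduction via Lemma~\ref{lem:reduction} and your treatment of the \scn\ case are both correct, and the latter is in fact a nice alternative to the paper's argument: where the paper picks a minimal return word and shows it forces a flip-flop-like cycle $(ab)^k$ that is non-synchronizing for $k>1$, you observe that idempotents fix their images pointwise, so $Q\dt a\cap Q\dt b$ consists of sinks, hence is empty when $n>1$; a reachability argument then makes both crossing maps bijections, the image of every nonempty word is $Q\dt a$ or $Q\dt b$ of size $\frac{n}{2}$, and synchronization forces $n=2$. That part checks out. Likewise your first branch of the unique-sink case (a state outside $Q\dt a\cup Q\dt b$ yields a proper subautomaton, apply strong induction, prepend $a$) is sound.

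However, the ``glued'' case $Q=Q\dt a\cup Q\dt b$ with $Q\dt a\cap Q\dt b=\{s\}$ --- which you yourself identify as the heart of the matter --- is not proved: the proposal ends with the statement that you \emph{expect} minimality to rule out avoidable flat moves, which is exactly the missing lemma, not a proof of it. The gap is genuine. Pairwise distinctness of the images $S_0,S_1,\dots,S_\ell$ gives no linear bound by itself, since the number of $m$-element subsets of $Q\dt a$ containing $s$ is $\binom{|Q\dt a|-1}{m-1}$, so a priori a shortest word could drift through exponentially many same-size images; you would have to prove the per-side strict decrease of $|S_i|$ for shortest words, and even granting it, your count gives $\ell\le|Q\dt a|+|Q\dt b|=n+1$, and the ``boundary refinement'' that is supposed to shave off the last two steps is only gestured at. The paper avoids this bookkeeping entirely: in the unique-sink case it shows there must exist a state with \emph{no predecessor} --- if every state had one, then along a cycle $q_0,\dots,q_{k-1}$ of predecessors, idempotency forces $q_i\dt a=q_i$ and $q_i\dt b=q_{i+1(\!\bmod k)}$, so the cycle is closed under both letters and the sink $z$ is unreachable from it, contradicting synchronizability. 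Deleting that state gives a smaller automaton of the same kind with a \sw\ $w$ of length $n-2$ by induction, and $aw$ (where $a$ moves the deleted state) resets $\mS$ in $n-1$ letters. You may want to either find this predecessor argument or actually prove your no-avoidable-flat-move claim; as written, the proposition is not established.
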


\begin{proof}
The class of automata with two idempotent input letters is closed under taking subautomata and quotients. Hence Lemma~\ref{lem:reduction} applies, and it suffices to verify the bound of Proposition~\ref{prop:binary} for \sa\ in this class which either are strongly connected or have a unique sink.

First, consider the \scn\ case. In this case we will prove a much stronger fact: the \rt\ of any \scn\ \snan\ $\mF=\langle Q,\Sigma\rangle$ with $|\Sigma|=2$ and the letters in $\Sigma$ being idempotents does not exceed $\max\{1,n-1\}$. We may suppose that $n>1$; otherwise, there is nothing to prove. Take a state $q_0\in Q$. There must be a letter $a\in\Sigma$ such that $p_0:=q_0\dt a\ne q_0$ as $q_0$ would be a sink otherwise. Since $\mA$ is \scn, there exists a word $w\in\Sigma^*$ such that $p_0\dt w=q_0$. Let $w$ be chosen to be a word of minimum length with the latter property. Since $p_0\dt a=q_0\dt a^2=q_0\dt a=p_0$, we see that $w\ne aw'$ because otherwise the suffix $w'$ would be a shorter word with the property $p_0\dt w'=q_0$. Thus, denoting by $b$ the other letter in $\Sigma$, we conclude that $w$ starts with $b$ and, moreover, the occurrences of $b$ and $a$ alternate in $w$. The last letter of $w$ cannot be $a$ because $q_0$ would be a fixed point of $a$ otherwise, and this is not the case. Summarizing, we see that $aw$ starts with $a$, ends with $b$, and $a$ and $b$ alternate in $aw$, that is, $aw=(ab)^k$ for some $k\ge1$.

If $k>1$, let $q_i:=q_0\dt(ab)^i$ and $p_i:=q_0\dt(ab)^ia$ for $i=1,\dots,k-1$. Then for each $i=0,\dots,k-1$, we have
\begin{equation}
\label{eq:cycle}
q_i\dt a=p_i,\ \ q_i\dt b=q_i,\ \ p_i\dt a=p_i,\ \ p_i\dt b=q_{i+1(\!\bmod k)}.
\end{equation}
From this, it is easy to deduce that $q_0\dt v\ne q_1\dt v$ for each word $v\in\Sigma^*$. Indeed, write $v$ as a product of powers of $a$ and $b$ and let $\ell$ be the number of `switches' from a power of $a$ to a power of $b$; that is, $\ell$ is the number of occurrences of the word $ab$ as a factor in $v$. Then a direct calculation based on \eqref{eq:cycle} gives the following values of $q_0\dt v$ and $q_1\dt v$: if $v$ ends with $a$, then $q_0\dt v=p_{\ell(\!\bmod k)}$ and $q_1\dt v=p_{\ell+1(\!\bmod k)}$; if $v$ ends with~$b$, then $q_0\dt v=q_{\ell(\!\bmod k)}$ and $q_1\dt v=q_{\ell+1(\!\bmod k)}$. Hence, the automaton $\mF$ is not synchronizing, a contradiction.

Thus, $k=1$, and we have $q_0\dt a=p_0,\ \ q_0\dt b=q_0,\ \ p_0\dt a=p_0,\ \ p_0\dt b=q_0$, that is, the set $\{q_0,p_0\}$ is closed under the action of letters in $\Sigma$. Since the automaton $\mF$ is \scn, this is only possible provided that $n=2$ and $Q=\{q_0,p_0\}$. The automaton $\mathrsfs{F}$ is then nothing but the classical flip-flop, see Fig.~\ref{fig:flip-flop}. Of course, every word of length 1 is a \sw\ for the flip-flop.
\begin{figure}[h]
\begin{center}
  \unitlength=3.5pt
    \begin{picture}(25,12)(0,-5)
    \gasset{curvedepth=4}
    \node(A1)(0,0){$p_0$}
    \drawloop[loopangle=180](A1){$a$}
    \node(A2)(25,0){$q_0$}
    \drawloop[loopangle=0](A2){$b$}
    \drawedge(A1,A2){$b$}
    \drawedge(A2,A1){$a$}
    \end{picture}
\caption{Filp-flop}\label{fig:flip-flop}
\end{center}
\end{figure}

Now consider the case of automata with a unique sink. Let $\mS=\langle Q,\Sigma\rangle$ be a \snan\ with two idempotent input letters and a unique sink, which we denote $z$. If $p$ and $q$ are two different states in $Q$, we say that $p$ is a \emph{predecessor} of $q$ in $\mS$ if $q$ is the image of $p$ under the action of a letter in $\Sigma$. First, we show that there exists a state without predecessors in $\mS$. Arguing by a contradiction, assume that every state has a predecessor in $\mS$. Since the number of states is finite, this assumption implies that for some $k>1$, there is a sequence $S$ of states $q_0,q_1,\dots,q_{k-1}$ such that $q_i$ is a predecessor of $q_{i+1(\!\bmod k)}$ for each $i=0,1,\dots,k-1$. Clearly, $z\notin S$, as the sink is not a predecessor of any state. Now take any $q_i\in S$ and let $a\in\Sigma$ be such that $q_{i-1(\!\bmod k)}\dt a=q_i$. Then $q_i\dt a=q_i$ since the letter $a$ is idempotent, and if $b$ is the other letter in $\Sigma$, then $q_i\dt b=q_{i+1(\!\bmod k)}$. We see that both $q_i\dt a$ and $q_i\dt b$ belong to $S$, that is, the set $S$ is closed under the action of letters in $\Sigma$. Hence, $z$ is not reachable from any state in $S$, while any \sw\ must bring all states in $Q$, including those in $S$, to $z$. This is a contradiction.

We show that $\mS$ has a \sw\ of length $n-1$ by induction on $n$. The induction basis $n=1$ is obvious; now assume that $n>1$. As shown in the preceding paragraph, there is a state $q\in Q$ that has no predecessors in $\mS$. Clearly, $q\ne z$ because $\mS$ is synchronizing and the sink $z$ must be reachable from every state in $Q$, whence $z$ must have a predecessor. Then the subautomaton $\mS'$ obtained by the restriction of the transition function of $\mS$ to the set $Q':=Q\setminus\{q\}$ is synchronizing and has $n-1$ states, two idempotent input letters, and a unique sink. The induction assumption applies to $\mS'$, whence there exists a word $w$ of length $n-2$ such that $Q'\dt w=\{z\}$. Since $q\ne z$, there exists a letter $a\in\Sigma$ such that $q\dt a\ne q$, that is, $q\dt a\in Q'$. Therefore, $Q\dt aw=\{z\}$, whence $aw$ is a \sw\ of length $n-1$ for $\mS$.
\end{proof}

Finally, we exhibit a series of $n$-automata $\mI_n$, $n=3,4,\dotsc$, showing that the bound of Proposition~\ref{prop:binary} is tight. The state set of $\mI_n$ is $\{1,2,\dots,n\}$; the action of the input letters $a$ and $b$ is defined as follows:
\[
i\dt a=\begin{cases}
i &\text{if $i$ is odd or $i=n$},\\
i{+}1 &\text{if $i$ is even and $i<n$};
\end{cases}\quad
i\dt b=\begin{cases}
i &\text{if $i$ is even or $i=n$},\\
i{+}1 &\text{if $i$ is odd and $i<n$}.
\end{cases}
\]
See Fig.~\ref{fig:example} for an illustration. By the construction, the letters $a$ and $b$ act as idempotent selfmaps. Clearly, the state $n$ is a unique sink of $\mI_n$. It is easy to see that $\mI_n$ is synchronizing, and the \rt\ of $\mI_n$ is at least $n-1$ since $n-1$ is the length of the shortest word that brings the state 1 to the sink.
\begin{figure}[th]
\begin{center}
\unitlength=.9mm
\begin{picture}(104,21)(5,5)
\node(A)(10,10){1}
\node(B)(32,10){2}
\node(C)(54,10){3}
\node(D)(76,10){4}
\multiput(85,10)(3,0){3}{\circle*{.75}}
\node(E)(100,10){$n$}
\drawedge(A,B){$b$}
\drawedge(B,C){$a$}
\drawedge(C,D){$b$}
\drawloop[ELpos=50,loopangle=90](A){$a$}
\drawloop[ELpos=50,loopangle=90](B){$b$}
\drawloop[ELpos=50,loopangle=90](C){$a$}
\drawloop[ELpos=50,loopangle=90](D){$b$}
\drawloop[ELpos=50,loopangle=90](E){$a,b$}
\end{picture}
\caption{The automaton $\mI_n$}
\label{fig:example}
\end{center}
\end{figure}

Quite interestingly, Gusev~\cite{Gusev:2013} has constructed for each odd~$n$, a binary \snan\ with \rt\ $\frac{n^2 -3n + 4}{2}$, in which one letter is idempotent and the other acts almost as an idempotent in the sense that it does not fix exactly one state in its image. Fig.~\ref{fig:Gusev} shows the 7-state automaton from Gusev's series; observe that it differs from $\mI_7$ by a single transition!
\begin{figure}[t]
\begin{center}
\unitlength .75mm
\begin{picture}(50,90)(-30,-85)
\gasset{Nw=14,Nh=14,Nmr=7,loopdiam=10}
\node(n0)(0,-5){1}
\node(n1)(-30,-20){7} \node(n2)(30,-20){2}
\node(n3)(-35,-50){6} \node(n4)(35,-50){3}
\node(n5)(-17,-75){5} \node(n6)(17,-75){4}
\drawedge[ELdist=2.0](n1,n0){$b$}
\drawedge[ELdist=1.5](n2,n4){$a$}
\drawedge[ELdist=1.7](n0,n2){$b$}
\drawedge[ELdist=1.7](n3,n1){$a$}
\drawedge[ELdist=1.7](n4,n6){$b$}
\drawedge[ELdist=1.7](n6,n5){$a$}
\drawedge[ELdist=1.7](n5,n3){$b$}
\drawloop[ELdist=1.5,loopangle=-90](n0){$a$}
\drawloop[ELdist=1.5,loopangle=205](n2){$b$}
\drawloop[ELdist=1.5,loopangle=-30](n1){$a$}
\drawloop[ELdist=2.4,loopangle=160](n4){$a$}
\drawloop[ELdist=1.5,loopangle=20](n3){$b$}
\drawloop[ELdist=1.5,loopangle=120](n6){$b$}
\drawloop[ELdist=1.5,loopangle=60](n5){$a$}
\end{picture}
\caption{Gusev's automaton for $n=7$}
\label{fig:Gusev}
\end{center}
\end{figure}
Thus, even a minimum possible step out of idempotency causes a drastic leap in terms of the value of \rt.

\section{Conclusion}

We have described a transformation $\mA\mapsto\mathbb{H}(\mA)$ that converts an arbitrary DFA with $n$ states and $k$ input letters into a DFA with $2n$ states and $k+1$ idempotent input letters of rank $n$. The transformation preserves the property of being synchronizing and doubles the \rt. Observe in passing that the transformation preserves several other properties relevant in the context of synchronization: say, $\mA$ is \scn\ or proper if and only if so is $\mathbb{H}(\mA)$.

Informally, our main result (Theorem~\ref{thm:main}) shows that synchronization problems for the class of DFAs with idempotent input letters are as difficult as in the general case. This informal statement can be precisely expressed in the language of computational complexity---one can use Theorem~\ref{thm:main} to transfer numerous hardness results collected in the theory of \sa\ (see \cite[Section~2]{KV} for an overview) to automata whose letters act as idempotent selfmaps. Theorem~\ref{thm:main} also implies that a quadratic in $n$ upper bound on the \rt\ of \sna\ with idempotent input letters would lead to a quadratic upper bound in the general case---here we recall that the best upper bound known so far is cubic. In this connection, we mention a result by Rystsov~\cite{Rystsov:2000} who established the upper bound $2(n-1)^2$ on the \rt\ of \sna\ whose input letters are either permutations or idempotents of rank $n-1$.

\paragraph{Acknowledgements.} The author is very much indebted to the anonymous referees of the previous version for their valuable remarks (in particular, for spotting inaccuracies in the original formulations of Lemma~\ref{lem:idempotent} and Corollary~\ref{cor:cerny} and suggesting an improvement in Theorem~\ref{thm:main}) and for providing several relevant references.

\bibliographystyle{plain}
\bibliography{abbrevs,idempotent_rev}

\end{document}